\newtheorem{theorem}{Theorem}[section]
\title{Generation and Distribution of Quantum Oblivious Keys for Secure Multiparty Computation}
\author[1,2]{Mariano Lemus}
\author[3,4]{Mariana F. Ramos}
\author[1,2]{Preeti Yadav}
\author[3,4]{Nuno A. Silva}
\author[3,4]{Nelson J. Muga}
\author[5]{Andr\'e Souto}
\author[1,2]{Nikola Paunkovi\'c}
\author[1,2]{Paulo Mateus}
\author[3,4]{Armando N. Pinto}
\affil[1]{Instituto de Telecomunica{c}\~{o}es, Lisbon, Portugal}
\affil[2]{Departamento de Matem\'{a}tica, Instituto Superior T\'{e}cnico, Av. Rovisco Pais, Lisbon, Portugal}
\affil[3]{Instituto de Telecomunica{c}\~{o}es, University of Aveiro, Campus Universit\'{a}rio de Santiago, 3810-193, Aveiro, Portugal}
\affil[4]{Department of Electronics, Telecommunications and Informatics, University of Aveiro, Portugal}
\affil[5]{Departamento de Inform\'{a}tica, Faculdade de Ci\^{e}ncias da Universidade de Lisboa, Lisbon, Portugal}
\begin{document}
\maketitle
\begin{abstract}%
The oblivious transfer primitive is sufficient to implement secure multiparty computation. However, secure multiparty computation based on public-key cryptography is limited by the security and efficiency of the oblivious transfer implementation. We present a method to generate and distribute oblivious keys by exchanging qubits and by performing commitments using classical hash functions. With the presented hybrid approach, quantum and classical, we obtain a practical and high-speed oblivious transfer protocol. We analyse the security and efficiency features of the technique and conclude that it presents advantages in both areas when compared to public-key based techniques.
\end{abstract}

%%%%%%%%%%%%%%%%%%%%%%%%%  INTRODUCTION %%%%%%%%%%%%%%%%%%%%%%%%%
\section{Introduction}

%%% MOTIVATION %%%
In Secure Multiparty Computation (SMC), several agents compute a function that depends on their own inputs, while maintaining them private~\cite{Lindell09}. Privacy is critical in the context of an information society, where data is collected from multiple devices (smartphones, home appliances, computers, street cameras, sensors, ...) and subjected to intensive analysis through data mining. This data collection and exploration paradigm offers great opportunities, but it also raises serious concerns. A technology able to protect the privacy of citizens, while simultaneously allowing to profit from extensive data mining, is going to be of utmost importance. SMC has the potential to be that technology if it can be made practical, secure and ubiquitous.

Current SMC protocols rely on the use of asymmetric cryptography algorithms~\cite{Laud2015}, which are considered significantly more computationally complex compared with symmetric cryptography algorithms~\cite{Asharov2017}. Besides being more computationally intensive, in its current standards, asymmetric cryptography cannot be considered secure anymore due to the expected increase of computational power that a large-scale quantum computer will bring~\cite{Bernstein2017}. Identifying these shortcomings in efficiency and security motivates the search for alternative techniques for implementing SMC without the need of public key cryptography.

%%%%%%%%%%%%%%%%%%%%%%%%%  SECURE MULTIPARTY COMPUTATION AND OBLIVIOUS TRANSFER %%%%%%%%%%%%%%%%%%%%%%%%%
\subsection{Secure Multiparty Computation and Oblivious Transfer}

Consider a set of $N$ agents and $f(x_1, x_2, ..., x_N)=(y_1, y_2, ..., y_N)$ a multivariate function. For $i \in \lbrace 1,...,N \rbrace$, a SMC service (see Figure~\ref{fig:smc_N}) receives the input $x_i$ from the $i$-th agent and outputs back the value $y_i$ in such a way that no additional information is revealed about the remaining $x_j, y_j$, for $j \neq i$. Additionally, this definition can be strengthened by requiring that for some number $M < N$ of corrupt agents working together, no information about the remaining agents gets revealed (secrecy). It can also be imposed that if at most $M'<N$ agents do not compute the function correctly, the protocol identifies it and aborts (authenticity).

\begin{figure}
  \centering
  \includegraphics[width=0.45\linewidth]{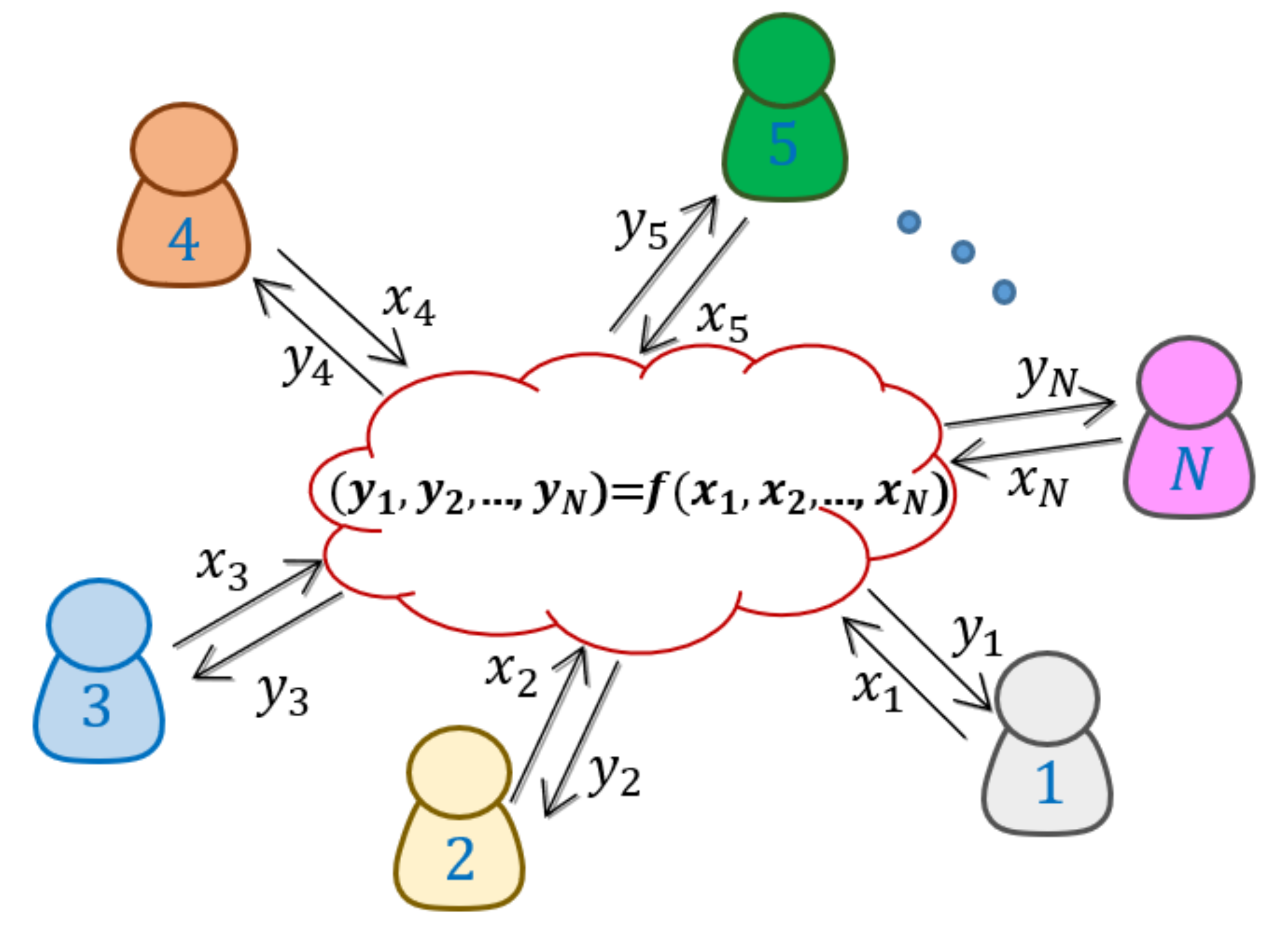}
  \caption{In secure multiparty computation, N parties compute a function preserving the privacy of their own input. Each party only has access to their own input-output pair.}
  \label{fig:smc_N}
\end{figure}

Some of the most promising approaches towards implementing SMC are based on oblivious circuit evaluation techniques such as Yao's garbled circuits for the two party case~\cite{4568207} and the GMW or BMR protocols for the general case~\cite{Goldreich:1987:PAM:28395.28420, Schneider2013, Laud2015, Beaver1990}. It has been shown that to achieve SMC it is enough to implement the Oblivious Transfer (OT) primitive and, without additional assumptions, the security of the resulting SMC depends only on that of the OT~\cite{Kilian:1988:FCO:62212.62215}. In the worst case, this requires each party to perform one OT with every other party for each gate of the circuit being evaluated. This number can be reduced by weakening the security or by increasing the amount of exchanged data~\cite{Harnik07}. Either way, the OT cost of SMC represents a major bottleneck for its practical implementation. Finding fast and secure OT protocols, hence, is a very relevant task in the context of implementing SMC. 

Let Alice and Bob be two agents. A 1-out-of-2 OT service receives bits $b_0, \, b_1$ as input from Alice and a bit $c$ as input from Bob, then outputs $b_c$ to Bob. This is done in a way that Bob gets no information about the other message, i.e. $b_{\overline{c}}$, and Alice gets no information about Bob's choice, i.e. the value of $c$~\cite{eprint-2005-12523}.

\subsection{State of the art}

Classical OT implementations are based on the use of asymmetric keys, and suffer from two types of problems. The first one is the efficiency: asymmetric cryptography relies on relatively complex key generation, encryption, and decryption algorithms ~\cite[Chapter~1]{Goldreich2001}~\cite [Chapter 6]{Paar10}. This limits achievable rates of OTs, and since implementations of SMC require a very large number of OTs ~\cite{Harnik07}~\cite{Asharov2017}, this has hindered the development of SMC-based applications. The other serious drawback is that asymmetric cryptography, based on integer number factorization or discrete-logarithm problems, is insecure in the presence of quantum computers, and therefore, it has to be  progressively abandoned. There are strong research efforts in order to find other hard problems that can support asymmetric cryptography~\cite{Bernstein2017}. However, the security of these novel solutions is still not fully understood. 

A possible way to circumvent this problem is by using quantum cryptography to improve the efficiency and security of current techniques. Quantum solutions for secure key distribution, Bit Commitment (BC) and OT have been already proposed~\cite{Broadbent2016}. The former was proved to be unconditionally secure (assuming an authenticated channel) and realizable using current technology. Although, it was shown to be impossible to achieve unconditionally secure quantum BC and OT ~\cite{Shenoy17}~\cite{lo:chau:97}~\cite{Mayers97}, one can impose restrictions on the power of adversaries in order to obtain practically secure versions of these protocols~\cite{PhysRevLett.100.220502,PhysRevA.81.052336}. These assumptions include physical limitations on the apparatuses, such as noisy or bounded quantum memories~\cite{6157089,lou:14,Almeida_2015}. For instance, quantum OT and BC protocols have been developed and implemented (see~\cite{2014NatCo3418E,2018NatCo450F,2012NatCo1326N}) under the noisy storage model. Nevertheless, solutions based on hardware limitations may not last for long, because as quantum technology improves the rate of secure OT instances will decrease. Other solutions include exploring relativistic scenarios using the fact that no information can travel faster than light \cite{PhysRevLett.115.030502,PhysRevLett.117.140506,PhysRevA.98.032327}. However, at the moment, these solutions do not seem to be practical enough to allow the large dissemination of SMC.

%%% CONTENTS OF THE PAPER %%%

In this work, we explore the resulting security and efficiency features of implementing oblivious transfer using a well known quantum protocol~\cite{yao:86} supported by using a cryptographic hash based commitment scheme~\cite{Halevi1996}. We call it a hybrid approach, since it mixes both classical and quantum cryptography. We analyse the protocol stand alone security, as well as its composable security in the random oracle model. Additionally, we study its computational complexity and compare it with the complexity of alternative public key based protocols. Furthermore, we show that, while unconditional information-theoretic security cannot be achieved, there is an advantage (both in terms of security and efficiency) of using quantum resources in computationally secure protocols, and as such, they are worth consideration for practical tasks in the near future. 

This paper is organized as follows. In Section II,  we present a quantum protocol to produce OT given access to a collision resistant hash function, define the concept of oblivious keys, and explain how having pre-shared oblivious keys can significantly decrease the computational cost of OT during SMC. The security and efficiency of the protocol is discussed in Section III. Finally, in Section IV we summarize the main conclusions of this work.

%%%%%%%%%%%%%%%%%%%%%%%%%  METHODS  %%%%%%%%%%%%%%%%%%%%%%%%%

\section{Methods}

\subsection{Generating the OTs}

In this section, we describe how to perform oblivious transfer by exchanging qubits. The protocol $\pi_{QOT}$ shown in Figure~\ref{fig:okdprotocol} is the well known quantum oblivious transfer protocol first introduced by Yao, which assumes access to secure commitments. The two logical qubit states $|0\rangle$ and $|1\rangle$ represent the computational basis, and the states $|+\rangle = (|0\rangle + |1\rangle)/\sqrt{2}$, $|-\rangle = (|0\rangle - |1\rangle)/\sqrt{2}$ represent the Hadamard basis. We also define the states $|(s_i, a_i)\rangle$ for $s_i, a_i \in \lbrace 0,1 \rbrace$ according to the following rule: 
    \begin{align*}
          &|(0, 0)\rangle = |0\rangle \quad |(0, 1)\rangle = |+\rangle\\
          &|(1, 0)\rangle = |1\rangle \quad |(1, 1)\rangle = |-\rangle.
    \end{align*}
Note that these states can be physically instantiated using, for instance, a polarization encoding fiber optic quantum communication system, provided that a fast polarization encoding/decoding process and an algorithm to control random polarization drifts in optical fibers are available~\cite{Pinto2018, Ramos2020}. 

\begin{figure}[h]
\noindent
\framebox{\parbox{\dimexpr\linewidth-2\fboxsep-2\fboxrule}{\small
  {\centering \textbf{Protocol} $ \pi_{QOT}$ \par}
  \textbf{Parameters:} Integers $n, m<n$. \\
  \textbf{Parties:} The sender Alice and the receiver Bob. \\
  \textbf{Inputs:} Alice gets two bits $b_0,b_1$ and Bob gets a bit $c$. \\
  \textit{(Oblivious key distribution phase)}
  \begin{enumerate}[leftmargin=0.6cm]
      \item Alice samples $s, a \in \lbrace 0,1 \rbrace^{n+m}$. For each $i \leq n+m$ she prepares the state $|\phi_i\rangle = |(s_i, a_i)\rangle$ and sends $|\phi\rangle= |\phi_1\phi_2 \ldots \phi_{n+m}\rangle$ to Bob. \vspace{2mm}
      \item Bob samples $\Tilde{a} \in \lbrace 0,1 \rbrace^{n+m}$ and, for each $i$, measures $|\phi_i\rangle$ in the computational basis if $\Tilde{a}_i=0$, otherwise measures it in the Hadamard basis. Then, he computes the string $\Tilde{s}=\Tilde{s}_1\Tilde{s}_2 \ldots \Tilde{s}_{n+m}$, where $\Tilde{s}_i=0$ if the outcome of measuring $|\phi_i\rangle$ was $0$ or $+$, and $\Tilde{s}_i=1$ if it was $1$ or $-$. \vspace{2mm}
      \item For each $i$, Bob commits $(\Tilde{s}_i,\Tilde{a}_i)$ to Alice. \vspace{2mm}
      \item Alice chooses randomly a set of indices $T \subset \lbrace 1,\ldots, n+m \rbrace$ of size $m$ and sends $T$ to Bob. \vspace{2mm}
      \item For each $j \in T$, Bob opens the commitments associated to $(\Tilde{s}_j,\Tilde{a}_j)$. \vspace{2mm}
      \item Alice checks if $s_j = \Tilde{s}_j$ whenever $a_j = \Tilde{a}_j$ for all $j \in T$. If the test fails Alice aborts the protocol, otherwise she sends $a^* = a|_{\overline{T}}$ to Bob and sets $k=s|_{\overline{T}}$. \vspace{2mm}
      \item Bob computes $x= a^* \oplus \Tilde{a}|_{\overline{T}}$ and $\Tilde{k} = \Tilde{s}|_{\overline{T}}$. 
  \end{enumerate}
  \textit{(Oblivious transfer phase)}
  \begin{enumerate}[leftmargin=0.6cm]
      \setcounter{enumi}{7}
      \item Bob defines the two sets $I_0= \lbrace i \mid x_i=0 \rbrace$ and $I_1= \lbrace i \mid x_i=1 \rbrace$. Then, he sends to Alice the ordered pair $(I_{c},I_{c \oplus 1})$. \vspace{2mm}
      \item Alice computes $(e_0, e_1)$, where $e_i = b_i \bigoplus_{j \in I_{c\oplus i}} k_j$, and sends it to Bob. \vspace{2mm}
      \item Bob outputs $\Tilde{b}_c = e_c \bigoplus_{j \in I_{0}} \Tilde{k}_j$.
  \end{enumerate}
  }}
  \caption{Quantum OT protocol based on secure commitments. The $\bigoplus$ denotes the bit XOR of all the elements in the family.}
  \label{fig:okdprotocol}
\end{figure}

Intuitively, this protocol works because the computational and the Hadamard are conjugate bases. Performing a measurement in the preparation basis of a state, given by $a_i$, yields a deterministic outcome, whereas measuring in the conjugate basis, given by $\bar{a}_i$, results in a completely random outcome. By preparing and measuring in random bases, as shown in steps 1 and 2, approximately half of the measurement outcomes will be equal to the prepared states, and half of them will have no correlation. As Alice sends the information of preparation bases to Bob in step 6, he gets to know which of his bits are correlated with Alice's. During steps 3 to 6, Bob commits the information of his measurement basis and outcomes to Alice, who then chooses a random subset of them to test for correlations. Passing this test (statistically) ensures that Bob measured his qubits as stated in the protocol as opposed to performing a different (potentially joint) measurement. Such strategy may extract additional information from Alice's strings, but would fail to pass the specific correlation check in step 6. At step 8, Bob separates his non-tested measurement outcomes in two groups: $I_0$ where he measured in the same basis as the preparation one, and $I_1$, in which he measured in the different basis. He then inputs his bit choice $c$ by selecting the order in which he sends the two sets to Alice. During step 9, Alice encrypts her first and second input bits with the preparation bits associated with the first and second second sets sent by Bob respectively. This effectively hides Bob's input bit because she is ignorant about the measurements that were not opened by Bob (by the security of the commitment scheme). Finally, Bob can decrypt only the bit encrypted with the preparation bits associated to $I_0$.

In real implementations of the protocol one should consider imperfect sources, noisy channels, and measurement errors. Thus, in step 6 Alice should perform parameter estimation for the statistics of the measurements, and pass whenever the error parameter es below some previously fixed value. Following this, Alice and Bob perform standard post-processing techniques of information reconciliation and privacy amplification before continuing to step 7. These techniques indeed work even in the presence of a dishonest Bob. As long as he has some minimal amount of uncertainty about Alice's preparation string $s$, an adequate privacy amplification scheme can be used to maximize Bob's uncertainty of one of Alice's input bits. This comes at the cost of increasing the amount of qubits shared per OT~\cite{Lindell2007}. An example of these techniques applied in the context of the noisy storage model (where the commitment based check is replaced by a time delay under noisy memories) can be found in~\cite{PhysRevA.81.052336}. 

\subsection{Oblivious key distribution}
In order to make the quantum implementation of OT more practical during SMC we introduce the concept of oblivious keys. The protocol $\pi_{QOT}$ can be separated in two phases: the {\em Oblivious Key Distribution phase} which consists of steps 1 to 7 and forms the $\pi_{OKD}$ subprotocol, and the {\em Oblivious Transfer phase} which takes steps 8 to 10 and we denote as the $\pi_{OK \rightarrow OT}$ subprotocol. Note that after step 7 of $\pi_{QOT}$ the subsets $I_0,I_1$ have not been revealed to Alice, so she has no information yet on how the correlated and uncorrelated bits between $k$ and $\Tilde{k}$ are distributed (recall that $k$ and $\Tilde{k}$ are the result of removing the tested bits from the strings $s$ and $\Tilde{s}$ respectively). On the other hand, after receiving Alice's preparation bases, Bob does know the distribution of correlated and uncorrelated bits between $k$ and $\Tilde{k}$, which is recorded in the string $x$ ($x_i=0$ if $a_i = \Tilde{a}_i$, otherwise $x_i=1$). 
Note that until step 7 of the protocol all computation is independent of the input bits $e_0, e_1, c$. Furthermore, from step 8, only the strings $k, \Tilde{k}$, and $x$ are needed to finish the protocol (in addition to the input bits). We call these three strings collectively an {\em oblivious key}, depicted in Figure~\ref{fig:OK1}. Formally, let Alice and Bob be two agents. Oblivious Key Distribution (OKD) is a service that outputs to Alice the string $k = k_1k_2 \ldots k_\ell$ and to Bob the string $\Tilde{k} = \Tilde{k}_1\Tilde{k}_2 \ldots \Tilde{k}_\ell$ together with the bit string $x = x_1x_2 \ldots x_\ell$, such that $k_i = \Tilde{k}_i$ whenever $x_i = 0$ and $\Tilde{k}_i$ does not give any information about $k$ whenever $x_i = 1$. All of the strings are chosen at random for every invocation of the service. A pair $(k, (\Tilde{k},x))$ distributed as above is what we call an {\em oblivious key pair}. Alice, who knows $k$, is referred to as the sender, and Bob, who holds $\Tilde{k}$ and $x$, is the receiver. In other words, when two parties share an oblivious key, the sender holds a string $k$, while the receiver has only approximately half of the bits of $k$, but knows exactly which of those bits he has.  
\begin{figure}[t]
    \begin{center}
        \includegraphics[width=0.6\linewidth]{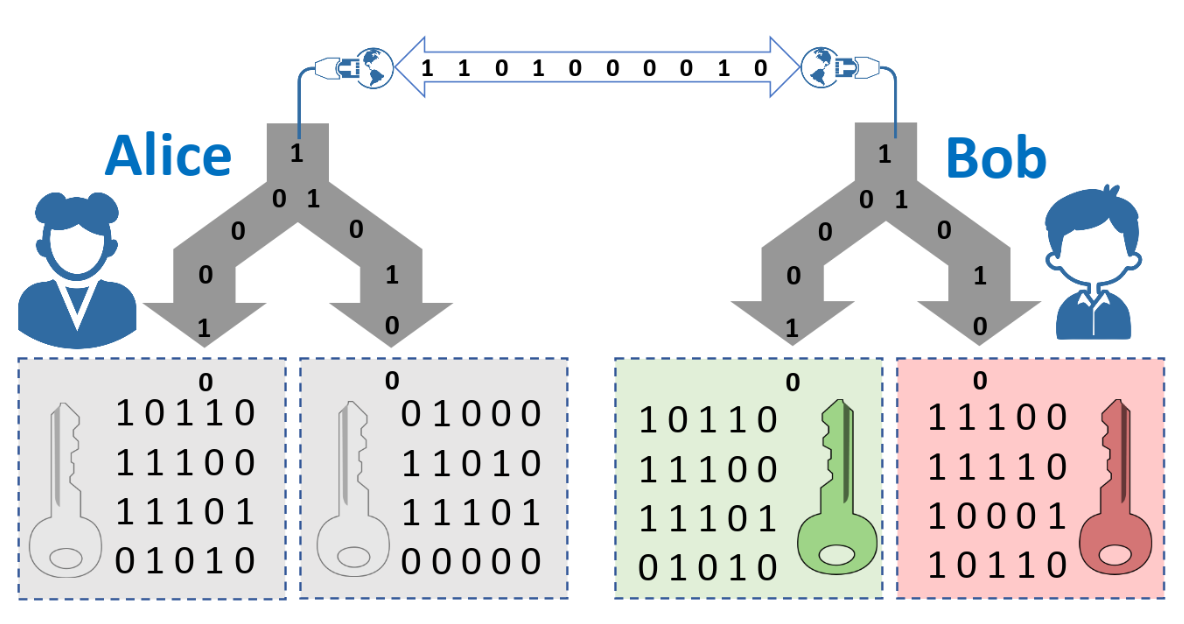}
        \caption{Oblivious keys. Alice has the string $k$ and Bob the string $\Tilde{k}$. For each party, the boxes in the left and right represent the bits of their string associated to the indices $i$ for which $x_i$ equals 0 (left box) or 1 (right box).
        Alice knows the entire key, Bob only knows half of the key, but Alice does not know which half Bob knows.}
        \label{fig:OK1}
    \end{center}
\end{figure}

When two parties have previously shared an oblivious key pair, they can securely produce OT by performing the steps $\pi_{OK \rightarrow OT}$ of $\pi_{QOT}$. This is significantly faster than current implementations of OT without any previous shared resource and does not require quantum communication during SMC. Note that the agents can perform, previously or concurrently, an OKD protocol to share a sufficiently large oblivious key, which can be then partitioned and used to perform as many instances of OT as needed for SMC.

Fortunately, it is possible to achieve fast oblivious key exchange if the parties have access to fast and reliable quantum communications and classical commitments. In order to use this QOT protocol, the commitment scheme must be instantiated. Consider the commitment protocol $\pi_{COMH}$ shown in Figure~\ref{fig:bcprotocol}, first introduced by Halevi and Micali. It uses a combination of universal and cryptographic hashing, the former to ensure statistical uniformity on the commitments, and the latter to hide the committed message. The motivation for the choice of this protocol for this task will become more apparent during the following sections as we discuss the security and efficiency characteristics of the composition of $\pi_{QOT}$ with $\pi_{COMH}$, henceforth referred as the $\pi_{HOK}$ (for Hybrid Oblivious Key) protocol for OT.

The existence of a reduction from OT to commitments, while proven within quantum cryptography through the $\pi_{QOT}$ protocol, is an open problem in classical cryptography. The existence of commitment schemes such as $\pi_{COMH}$, which do not rely on asymmetric cryptography, provides a way to obtain OT in the quantum setting while circumventing the disadvantages of asymmetric cryptography.

\begin{figure}
\noindent
\framebox{\parbox{\dimexpr\linewidth-2\fboxsep-2\fboxrule}{\small
  {\centering \textbf{Protocol} $ \pi_{COMH}$ \par}
  \textbf{Parameters:} Message length $\Tilde{n}$ and security parameter $k$. A universal hash family $\textbf{F}= \lbrace f: \lbrace 0,1 \rbrace^{\ell} \rightarrow \lbrace 0,1 \rbrace^{k} \rbrace$, with $\ell = 4k+2\Tilde{n}+4$. A collision resistant hash function $H$. \\
  \textbf{Parties:} The verifier Alice and the committer Bob. \\
  \textbf{Inputs:} Bob gets a string $\Tilde{m}$ of length $\Tilde{n}$. \\
  \textit{(Commit phase)}
  \begin{enumerate}[leftmargin=0.6cm]
      \item Bob samples $r \in \lbrace 0,1 \rbrace^{\ell}$, computes $y = H(r)$, and chooses $f \in \textbf{F}$, such that $f(r)=\Tilde{m}$. Then, he sends $(f,y)$ to Alice.
  \end{enumerate}
  \textit{(Open phase)}
  \begin{enumerate}[leftmargin=0.6cm]
      \setcounter{enumi}{1}
      \item Bob sends $r$ to Alice. \vspace{2mm}
      \item Alice checks that $H(r)=y$. If this test fails she aborts the protocol. Otherwise, she outputs $f(r)$.
  \end{enumerate}
  }}
  \caption{Commitment protocol based on collision resistant hash functions}
  \label{fig:bcprotocol}
\end{figure}
%%%%%%%%%%%%%%%%%%%%%%%%%  Security   %%%%%%%%%%%%%%%%%%%%%%%%%

\section{Results and discussion}

\subsection{Security}
In this section, we analyse the security of the proposed composition of protocols. The main result is encapsulated in the following theorem.

\begin{theorem}
The protocol $\pi_{HOK}$ is secure as long as the hash function is collision resistant. Moreover, if the hash function models a Random Oracle, a simple modification of the protocol can make it universally-composable secure. 
\end{theorem}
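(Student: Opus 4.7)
The plan is to split the proof along the two claims. Since $\pi_{HOK}$ is by construction the composition of $\pi_{QOT}$ with the hash-based commitment $\pi_{COMH}$ plugged into steps~3 and~5, I would first establish that $\pi_{COMH}$ realizes a stand-alone secure commitment functionality under the collision-resistance assumption, then that $\pi_{QOT}$ is stand-alone secure in the $\mathcal{F}_{COM}$-hybrid model, and finally invoke the classical sequential composition theorem to conclude stand-alone security of $\pi_{HOK}$. For $\pi_{COMH}$, the binding property reduces directly to collision resistance: two valid openings for the same $(f,y)$ would yield distinct preimages $r\neq r'$ with $H(r)=H(r')=y$; the hiding property comes from the universal family $\mathbf{F}$ and the leftover hash lemma, since $\ell=4k+2\tilde n+4$ is chosen so that $r$ retains enough conditional min-entropy given $y$ to make $(f,f(r))$ statistically close to uniform.

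For the $\pi_{QOT}$ side, security against a corrupt sender (Alice) is the easier direction: the only messages she receives are commitments (hence indistinguishable from random by hiding), the opened subset on $T$ (which is independent of $c$), and finally the ordered pair $(I_c,I_{c\oplus 1})$ in step~8, whose distribution is invariant under swapping $I_0$ and $I_1$, so her view is independent of $c$ and a simulator can produce it with only access to $\mathcal{F}_{OT}$'s abort/acceptance signal. Security against a corrupt receiver (Bob) is where the cut-and-choose plus commitments do the real work: once all $(\tilde s_i,\tilde a_i)$ are committed before $T$ is revealed, the binding property fixes a classical string that a malicious Bob must be consistent with, and a quantum sampling argument (in the spirit of Bouman--Fehr) shows that passing the test on $T$ implies, except with negligible probability, that the committed values on $\overline{T}$ are close to what honest measurements would produce. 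Combining this with the BB84-style uncertainty relation gives lower bounds on Bob's smooth min-entropy about $k|_{\overline{T}}$ conditioned on whichever basis mapping he chose, and thus on at least one of the XOR masks $\bigoplus_{j\in I_{c\oplus i}}k_j$ in step~9, hiding at least one of $b_0,b_1$.

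For the second claim, I would observe that the Halevi--Micali commitment becomes both extractable and equivocable when $H$ is modeled as a Random Oracle, provided we domain-separate oracle calls, e.g.\ replace $y=H(r)$ by $y=H(\mathrm{sid}\,\|\,\mathrm{pid}\,\|\,r)$; this is the simple modification I would propose. Extractability follows by having the simulator observe Bob's oracle queries and, upon receiving $(f,y)$, search his query list for the (essentially unique) $r$ with $H(\mathrm{sid}\,\|\,\mathrm{pid}\,\|\,r)=y$, and output $f(r)$ to $\mathcal{F}_{COM}$; equivocability (not actually needed here, since only Bob commits, but available if the construction is later reused symmetrically) follows from programming the oracle. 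With an extractable UC-commitment in hand, $\pi_{QOT}$ can be shown to UC-realize $\mathcal{F}_{OT}$ in the $\mathcal{F}_{COM}$-hybrid model by essentially the same simulator sketched above, upgraded so that extracted committed values let the simulator reconstruct Bob's effective measurement outcomes and therefore his input $c$; the UC composition theorem (in its quantum form, e.g.\ Unruh's) then yields UC security of $\pi_{HOK}$.

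The main obstacle I expect is the malicious-Bob analysis on the quantum side: even though commitments pin down classical values, Bob may retain quantum side information entangled with the qubits he received, and one must prove that this side information is of little use after the cut-and-choose, uniformly over all quantum attacks. This requires a careful quantum sampling/min-entropy argument rather than a direct reduction, and in the UC case the simulator's RO-based extraction must be shown to remain sound in the presence of such entangled quantum auxiliary input, which is the technically delicate point where I would spend most of the effort.
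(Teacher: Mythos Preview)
Your overall architecture matches the paper's: show that $\pi_{COMH}$ is a secure commitment under collision resistance, invoke the known security of $\pi_{QOT}$ in the $\mathcal{F}_{COM}$-hybrid model, and compose. The paper's own proof is much terser than yours---it simply cites Halevi--Micali for the commitment, Yao for stand-alone security of $\pi_{QOT}$, and Unruh for its UC security, without reproving any of these; your sketch of hiding via the leftover hash lemma, binding via collision resistance, and the cut-and-choose plus quantum-sampling argument against a malicious Bob is a faithful unpacking of exactly what those citations contain. So for the first claim you are doing the same proof, just with the black boxes opened.

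Where you diverge is in the ``simple modification'' for UC security. The paper does \emph{not} tweak $\pi_{COMH}$ by domain-separating the oracle as you propose; instead it invokes the HMQ construction of Hofheinz and M\"uller-Quade, which is a generic compiler taking any stand-alone secure commitment plus a Random Oracle and producing a UC commitment (perfectly hiding, computationally binding). The paper then plugs that compiled commitment into $\pi_{QOT}$ and appeals to Unruh's quantum UC composition theorem. Your route---arguing directly that the RO-instantiated Halevi--Micali scheme becomes extractable by observing oracle queries, and that equivocability is not needed since only Bob commits---is plausible and arguably more economical, but it is a genuinely different argument: you would have to justify carefully that extractability alone suffices for the simulator on the corrupt-Bob side (including soundness of RO-based extraction against a quantum Bob with entangled side information, which you correctly flag as the delicate point), whereas the paper sidesteps this entirely by citing a ready-made UC commitment. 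Both approaches are valid; the paper's buys modularity and brevity at the cost of a heavier imported result, while yours is more self-contained but carries the burden of the quantum-RO extraction argument you identify in your final paragraph.
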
 
\begin{proof} The security proof relies on several well-established results in cryptography. First, notice that the $\pi_{HOK}$ protocol is closely related to the standard Quantum OT protocol $\pi_{QOT}$, which is proven statistically secure in Yao's original paper~\cite{Yao_1995} and later universally composable in the quantum composability framework~\cite{Unruh10}. The difference between the two is that $\pi_{QOT}$ uses ideal commitments, as opposed to the hash-based commitments in $\pi_{HOK}$. 
We start by showing that the protocol $\pi_{HOK}$ is standalone secure. For this case, we only need to replace the ideal commitment of $\pi_{QOT}$ with a standalone secure commitment protocol, such as the Halevi and Micali~\cite{Halevi1996}, which is depicted in $\pi_{COMH}$. Since the latter is secure whenever the hash function is collision resistant, we conclude that $\pi_{HOK}$ is secure whenever the hash function is collision resistant.

Finally, we provide the simple modification of $\pi_{HOK}$ that makes it universally-composable secure when the hash function models a Random Oracle. The modification is only required to improve upon the commitment protocol, as Yao's protocol with ideal commitments is universally-composable~\cite{Unruh10}. Indeed, we need to consider universally composable commitment scheme instead of $\pi_{COMH}$. This is achieved by the HMQ construction~\cite{Hofheinz04} which, given a standalone secure commitment scheme and a Random Oracle, outputs a universally-composable commitment scheme, which is perfectly hiding and computationally binding, that is, secure as far as collisions cannot be found. So we just need to replace $\pi_{COMH}$ with the output of the HMQ construction, when  $\pi_{COMH}$ and $H$ are given as inputs and $H$ models a Random Oracle.
\end{proof}

Regarding the above theorem we note that, for the composable security, the HMQ construction mentioned in the proof formally requires access to a random oracle, which is an abstract object used for studying security and cannot be realized in the real world. Hence, we leave it as an additional security property, as hash functions are traditionally modelled as random oracles. Stand alone security of the $\pi_{HOK}$ protocol \textit{does not} require the hash function to be a random oracle.

The use of collision resistant hash functions is acceptable in the quantum setting, as it has been shown that there exist functions for which a quantum computer does not have any significant advantage in finding collisions when compared with a classical one \cite{Aaronson:2004}. One point to note about the security of $\pi_{OKD}$ is that it is not susceptible to \textit{intercept now-decrypt later} style of attacks. Bob can attempt an attack in which he does not properly measure the qubits sent by Alice at step 2, and instead waits until Alice reveals the test subset in step 4 to measure honestly only those qubits. For that he must be able to control the openings of the commitment scheme such that Alice opens the values of his measurement outcomes for those qubits. In order to do this, he must be able find collisions for $H$ before step 5. This means that attacking the protocol by finding collisions of the hash function is only effective if it is done in real time, that is, between steps 3 and 5 of the protocol. This is in contrast to asymmetric cryptography based OT, in which Bob can obtain both bits if he is able to overcome the computational security at a later stage. 

Finally, we point out that the OT extension algorithms that are used during SMC often rely only on collision resistant hash functions~\cite{Asharov:2013:MEO:2508859.2516738} anyway. If those protocols are used to extend the base OTs produced by $\pi_{HOK}$, we can effectively speed up the OT rates without introducing any additional computational complexity assumption. 

\subsection{Efficiency}
Complexity-wise, the main problem with public-key based OT protocols is that they require a public/private key generation, encryption, and decryption per transfer. In the case of RSA and ElGamal based algorithms, this has complexity $O(n^{2.58})$ (where $N= 2^{n}$ is the size of the group), using Karatsuba multiplication and Berett reduction for Euclidian division~\cite{Menezes1996}. Post-quantum protocols are still ongoing optimization, but recent results show RLWE key genereration and encryption in time $O(n^2 \log(n))$~\cite{Ding2012}. 

To study the time complexity of the $\pi_{HOK}$ protocol, consider first the complexity of $\pi_{COMH}$. It requires two calls of $H$ and one call of the universal hash family $\textbf{F}$, $\Tilde{n}$ bit comparisons (if using the technique proposed in ~\cite{Halevi1996} to find the required $f$), and one additional evaluation of $f$. Cryptographic hash functions are designed so that their time complexity is linear on the size of the input, which in this case is $\ell=4k+2\Tilde{n}+4$. To compute the universal hashing, the construction in~\cite{Halevi1996} requires $\Tilde{n}k$ binary multiplications. Thus, the running time of $\pi_{COMH}$ is linear on the security parameter $k$. On the other hand, $\pi_{QOT}$ has two security parameters: $n$, associated to the size of the keys used to encrypt the transferred bits, and $m$, associated to the security of the measurement test done by Alice. The protocol requires $n+m$ qubit preparations and measurements, $n+m$ calls of the commitment scheme, and $n$ bit comparisons. This leads to an overall time complexity of $O(k(n+m))$ for the $\pi_{HOK}$ protocol, which is linear in all of its security parameters. 

In realistic scenarios, however, error correction and privacy amplification must be implemented during the $\pi_{OK \rightarrow OT}$. For the former, LDPC codes~\cite{Martinez2013} or the cascade algorithm~\cite{Brassard1993} can be used, and the latter can be done with universal hashing. For a given channel error parameter, these algorithms have time complexity linear in the size of the input string, which in our case is $n$. Hence, $\pi_{HOK}$ stays efficient when considering channel losses and preparation/measurement errors.

One of the major bottlenecks in the GMW protocol for SMC is the number of instances of OT required (it is worth noting that GMW uses 1-out-of-4 OT, which can efficiently be obtained from two instances of the 1-out-of-2 OT~\cite{naor05}). A single Advanced Encryption Standard (AES) circuit can be obtained with the order of $10^6$ instances of OT. However, with current solutions, i.e., with computational implementations of OT based on asymmetric classical cryptography, one can generate $\sim10^3$ secure OTs per second in standard devices ~\cite{cho:orl:15}. It is possible to use OT extension algorithms to increase its size up to rates of the order of $10^6$ OT per second~\cite{Asharov2017}. Several of such techniques are based on symmetric cryptography primitives~\cite{cho:orl:15}, such as hash functions, and could also be used to extend the OTs generated by $\pi_{HOK}$.

Due to the popularity of crypto-currencies, fast and efficient hashing machines have recently become more accessible. Dedicated hashing devices are able to compute SHA-256 at rates of $10^{12}$ hashes per second (see Bitfury, Ebit, and WhatsMiner, for example). In addition, existent standard Quantum Key Distribution (QKD) setups can be adapted to implement OKD, since both protocols share the same requirements for the generation and measurement of photons. Notably, QKD setups have already demonstrated secret key rates of the order of $10^{6}$ bits per second ~\cite{com:fro:luc:14,isl:lim:cah:17,ko:cho:cho:18, wan:pen:yin:18, pir:and:ber:19}. It is also worth mentioning that, as opposed to QKD, OKD is useful even in the case when Alice and Bob are at the same location. This is because in standard key distribution the parties trust each other and, if at the same location, they can just exchange hard drives with the shared key, whereas when sharing oblivious keys, the parties do not trust each other and need a protocol that enforces security. Thus, for the cases in which both parties being at the same location is not an inconvenience, the oblivious key rates can be further raised, as the effects of channel noise are minimized. 

Direct comparisons of OT generation speed between asymmetric cryptography techniques and quantum techniques are difficult because the algorithms run on different hardware. Nevertheless, as quantum technologies keep improving, the size and cost of devices capable of implementing quantum protocols will decrease and their use can result in significant improvements of OT efficiency, in the short-to-medium term future.

%%%%%%%%%%%%%%%%%%%%%%%%%  CONCLUSIONS %%%%%%%%%%%%%%%%%%%%%%%%%
\section{Conclusions}
Motivated by the usefulness of SMC as a privacy-protecting data mining tool, and identifying its OT cost as its main implementation challenge, we have proposed a potential solution for practical implementation of OT as a subroutine SMC. The scheme consists on pre-sharing an oblivious key pair and then using it to compute fast OT during the execution of the SMC protocol. We call this approach hybrid because it uses resources traditionally associated with classical symmetric cryptography (cryptographic hash functions), as well as quantum state communication and measurements on conjugate observables, resources associated with quantum cryptography. The scheme is secure as far as the chosen hash function is secure against quantum attacks. In addition, we showed that the overall time complexity of $\pi_{HOK}$ is linear on all its security parameters, as opposed to the public-key based alternatives, whose time complexities are at least quadratic on their respective parameters. Finally, by comparing the state of current technology with the protocol requirements, we concluded that it has the potential to surpass current asymmetric cryptography based techniques.  

It was also noted that current experimental implementations of standard discrete-variable QKD can be adapted to perform $\pi_{HOK}$. The same post-processing techniques of error correction and privacy amplification apply, however, fast hashing subroutines should be added for commitments during the parameter estimation step. Future work includes designing an experimental setup, meeting the implementation challenges, and experimentally testing the speed, correctness, and security of the resulting oblivious key pairs. This includes computing oblivious key rate bounds for realistic scenarios and comparing them with current alternative technologies. Real world key rate comparisons can help us understand better the position of quantum technologies in the modern cryptographic landscape.

Regarding the use of quantum cryptography during the commitment phase; because of the impossibility theorem for unconditionally secure commitments in the quantum setting~\cite{Mayers97}, one must always work with an additional assumption on top of needing quantum resources. The noisy storage model provides an example in which the commitments are achieved by noisy quantum memories~\cite{lou:14,alm:etal:15,lou:16}. The drawback of this particular assumption is the fact that advances in quantum storage technology work against the performance of the protocol, which is not a desired feature. The added cost of using quantum communication is a disadvantage. So far, to the knowledge of the authors, there are no additional practical quantum bit commitment protocols that provide advantages in security or efficiency compared to classical ones once additional assumptions (such as random oracles, common reference strings, computational hardness, etc.,) are introduced. Nevertheless, we are optimistic that such protocols can be found in the future, perhaps by clever design, or by considering a different a kind of assumption outside of the standard ones.

\section*{Acknowledgments}
This work is supported by the Fundação para a Ciência e a Tecnologia (FCT) through national funds, by FEDER, COMPETE 2020, and by Regional Operational Program of Lisbon, under UIDB/50008/2020, UIDP/50008/2020, UID/CEC/00408/2013, POCI-01-0145-FEDER-031826, POCI-01-0247-FEDER-039728, PTDC/CCI-CIF/29877/2017,  PD/BD/114334/2016, PD/BD/113648/2015, and CEECIND/04594/2017.
%%%%%%%%%%%%%%%%%%%%%%%%%  BIBLIOGRAPHY %%%%%%%%%%%%%%%%%%%%%%%%%
\bibliographystyle{IEEEtran}
\bibliography{bibfile}

\end{document}